\definecolor{darkblue}{rgb}{0.,0.,0.4}
\definecolor{darkred}{rgb}{0.5,0.,0.}
\newcommand{\calO}{\mathcal{O}}
\newcommand{\calH}{{\mathcal{H}}}
\newcommand{\CC}{{\mathbb C}}
\newcommand{\RR}{{\mathbb R}}
\newcommand{\ZZ}{{\mathbb Z}}
\newcommand{\norm}[1]{\left\| {#1} \right\|}
\newcommand{\trnorm}[1]{\left\| {#1} \right\|_\mathrm{tr}}
\newcommand{\half}{{\tfrac{1}{2}}}
\newtheorem{lemma}{Lemma}
\newtheorem{fact}[lemma]{Fact}
\newtheorem{theorem}[lemma]{Theorem}
\begin{document}

% compute the volume of tubular neighborhood for small enough a, using compactness of the space.
%Ref:
% https://projecteuclid.org/download/pdf_1/euclid.mmj/1029001150
% https://arxiv.org/pdf/1210.3742.pdf
% http://www.math.uchicago.edu/~shmuel/AAT-readings/Data%20Analysis%20/Tubes/Weyl,%20volume%20of%20tubes.pdf

% TODO: write your article's title here.
% The article title is centered, Large boldface, and should fit in two lines
\begin{center}{\Large \textbf{
A degeneracy bound for homogeneous topological order
}}\end{center}

% TODO: write the author list here. Use initials + surname format.
% Separate subsequent authors by a comma, omit comma at the end of the list.
% Mark the corresponding author with a superscript *.
\begin{center}
Jeongwan Haah
\end{center}

% TODO: write all affiliations here.
% Format: institute, city, country
\begin{center}
Microsoft Quantum, Redmond, Washington, USA
\\
% TODO: provide email address of corresponding author
jwhaah@microsoft.com
\end{center}

%\begin{center}
%27 September 2020
%\end{center}

% For convenience during refereeing: line numbers
% \linenumbers

\section*{Abstract}
{\bf
We introduce a notion of homogeneous topological order,
which is obeyed by most, if not all, known examples of topological order
including fracton phases on quantum spins (qudits).
The notion is a condition on the ground state subspace, rather than on the Hamiltonian,
and demands that given a collection of ball-like regions,
any linear transformation on the ground space be
realized by an operator that avoids the ball-like regions.
We derive a bound on the ground state degeneracy $\mathcal D$ 
for systems with homogeneous topological order 
on an arbitrary closed Riemannian manifold of dimension $d$,
which reads \[ \log \mathcal D \le c \mu (L/a)^{d-2}.\]
Here, $L$ is the diameter of the system, $a$ is the lattice spacing,
and $c$ is a constant that only depends on the isometry class of the manifold,
and $\mu$ is a constant that only depends on the density of degrees of freedom.
If $d=2$, the constant $c$ is the (demi)genus of the space manifold.
This bound is saturated up to constants by known examples.
}

% TODO: include a table of contents (optional)
% Guideline: if your paper is longer that 6 pages, include a TOC
% To remove the TOC, simply cut the following block
%\vspace{10pt}
%\noindent\rule{\textwidth}{1pt}
%\tableofcontents\thispagestyle{fancy}
%\noindent\rule{\textwidth}{1pt}
%\vspace{10pt}

\section{Introduction}

Fracton order refers to perturbatively stable gapped phases of matter 
that have excitations of restricted mobility~\cite{Nandkishore2018,Pretko2020}.
The phases share an important property with conventional topological order
that there is no local observable for degenerate ground state subspaces.
Beyond this aspect, there is not much that is purely topological in fracton phases:
there exist analogs of Wilson loop operators
but they only give a many-to-one map into homology groups;
continuum field theories have been studied~\cite{Slagle2017,You2019,Gorantla2020}
but complete data of operators on the ground state subspace
still depend on geometric details.
Recently~\cite{Wen2020,Aasen2020,Wang2020},
it is proposed that fracton phases are obtained by stitching together
blocks of conventional topological order (anomalous or not),
providing a machinery to write a vast number of examples.
This construction requires so many algebraic quantities and parameters, including length scales of constituent blocks,
that we are motivated to pause and ask what it means for a many-body state 
to represent a quantum phase of \emph{homogeneous} matter.
Translation invariance seems natural
but the spatial manifold does not always have a canonical translation group action.

In this article,
while we do not directly attempt to answer this question of homogeneity,
we propose a general condition for spatial homogeneity of ``topological'' many-body states,
which holds for all much-studied examples to the author's knowledge,
yet rules out many situations that are unreasonable from physical perspectives.
To demonstrate nontrivial mathematical content of the condition,
we prove a sharp bound on the ground state degeneracy.

Recall that one of prominent characteristics of fracton phases is that
there are infinitely many superselection sectors.
A finitary statement is that
a lattice Hamiltonian in a fracton phase on
a $d$-torus of linear length $L$ with a fixed finite dimensional degrees of freedom (qudits) per site,
the ground state degeneracy $\mathcal D$ is 
given by a diverging function from system sizes $L$ to positive integers.
For most relevant constructions, we know that $\log \mathcal D$ is a topology-dependent constant in $d=2$ 
and $\log \mathcal D$ is proportional to $L$ in $d=3$.
In fact, it is absurdly trivial to achieve $\log \mathcal D \sim L^{d-2}$ for any spatial dimension $d \ge 2$,
by simply stacking any topological ordered state in two dimensions along all but two directions in a $d$-space.
The homogeneity condition that we are going to propose
implies that this growth rate of the degeneracy is the fastest possible.

\section{Main result}

Consider a closed (i.e., connected, compact, without boundary) Riemannian manifold
on which qudits are laid down.
We do not consider microscopic fermions here.
We use a positive constant~$a$ throughout as the lattice spacing:
we assume always that in a ball of radius~$a$ there are $\calO(1)$ dimensional degrees of freedom
and the diameter of the system is~$L$ that is much larger than the lattice spacing $a$.
Here, the diameter of a manifold is the distance between two points, maximized over all possible pairs of points.
Let $\Pi$ denote a subspace% 
\footnote{
It is legitimate to think of $\Pi$ as the ground state subspace of some Hamiltonian,
but our argument will have nothing to do with a Hamiltonian.
}
of the full Hilbert space~$\calH$ of the qudits.
We will use the same symbol $\Pi$ to denote the orthogonal projector onto the subspace~$\Pi$.
For two different operators~$O_1$ and~$O_2$ that preserve the subspace~$\Pi$
(i.e., $[\Pi,O_1] = [\Pi, O_2] = 0$)
it may happen that $(O_1 - O_2) \Pi = 0$,
in which case we say~$O_1$ and~$O_2$ are {\em equivalent} on~$\Pi$.
In general there are many equivalent operators that induce a given linear transformation on $\Pi$.

We define that a set~$A$ of qudits is {\it correctable against erasure}
or simply {\em correctable} with respect to~$\Pi$, 
if for every linear transformation on~$\Pi$,
there exists an operator supported on the complement of~$A$ that induces the transformation.%
\footnote{
An equivalent, perhaps better known condition for the correctability is that 
for all operators $O_A$ supported on~$A$ there is a complex number~$c(O_A)$ such that
$\Pi O_A \Pi = c(O_A) \Pi$.
This equation means that any operator on $A$ acts trivially on $\Pi$.
This formulation is known as the Knill-Laflamme criterion~\cite{KnillLaflamme1996}.
}
That is, the complement of a correctable subset of qudits
supports a complete set of operators for~$\Pi$.
So, even if a correctable set becomes inaccessible to a thought experimentalist,
the state in~$\Pi$ can be manipulated to reconstruct the whole system ---
the system is ``corrected'' from erasure.
Now, we define that a subspace~$\Pi$ has {\em homogeneous topological order}
if any set~$A$ of qudits whose distance~$a$ neighborhood 
is contained in the union of some disjoint (topological) balls in the manifold,
is always correctable.
Colloquially speaking, with homogeneous topological order,
any region that deformation-retracts to a discrete set of points has to be correctable.

We can now state our bound:
\begin{theorem}\label{thm:degbound}
Suppose that our closed space manifold $M$ of dimension~$d \ge 2$
has the metric normalized such that its diameter~$L$ is~$1$,
and that the local Hilbert space dimension of the degrees of freedom within any ball of radius $a$ is $\calO(1)$.
Suppose that $\Pi$ on the Hilbert space of qudits on $M$ has homogeneous topological order.
Then 
\begin{align}
\log \dim_\CC \Pi \le c \mu (L/a)^{d-2}
\end{align} 
for some constant $c$ that depends only on the isometry class of $M$
and $\mu$ that depends only on the the density of degrees of freedom.

In addition, if $M$ is the standard $d$-sphere, then $\dim_\CC \Pi = 1$.
If $d=2$, then $c$ is the (demi)genus of $M$.
\end{theorem}
The normalization of the metric such that $L=1$ is not important to the result,
but we mention it to disambiguate what the isometry class means;
if the diameters are different, two spaces cannot be isometric.
Note also that an isometry is automatically a homeomorphism.

Assumptions of similar flavor for two-dimensional systems
were considered in \cite{FHKK2016} 
and it was concluded that \mbox{$\dim_\CC \Pi \le \calO(1)$}.
The proof of the theorem below will be an application of 
an idea by Bravyi, Poulin, and Terhal~\cite{BravyiPoulinTerhal2010Tradeoffs},
augmented by a new topological argument to handle arbitrary closed Riemannian manifolds.
The latter is our main technical contribution.

\section{Generality}

Before we present a formal proof of the theorem,
it is important to understand the generality of the condition.
First, a discrete gauge theory
(or the toric code~\cite{Kitaev2003Fault-tolerant})
has homogeneous topological order for the following reason.
In this model, any operator that commutes with the ground space projector~$\Pi$
is a (co)homological cycle.
The action of such an operator depends on the (co)homology class it represents,
and there always exists a representative 
that goes around any given ball or a given collection of disjoint balls.
Thus, we can always find a complete set of representatives for operators on $\Pi$ 
in the complement of the balls.
This implies that the discrete gauge theory has homogeneous topological order.

Another way to see why any anyon model has to obey our homogeneous topological order condition
is to consider an equivalent notion of correctability~\cite{FHKK2016}.
A set $A$ of qudits is correctable against erasure with respect to~$\Pi$ 
if for arbitrary transformation on $\Pi$ by an operator supported on~$A$ 
can be reversed by some transformation on the whole system;
this has to be true intuitively,
since the full subspace $\Pi$ can be accessed on the complement of $A$,
on which the adversarial operation did not act.
Now, imagine that $A$ is a disk.
For an anyon model, an operator $O_A$ on $A$ would create some anyons from the ground state,
but by locality of the Hamiltonian those anyons should be located near $A$.
Since (a small neighborhood of) $A$ does not contain any topologically nontrivial loop,
if the anyons are pushed towards the center of $A$,
then the anyons should fuse to vacuum with certainty.
The overall procedure from the creation of the anyons by $O_A$ to the fusion into the vacuum,
is happening near $A$ and thus should not induce any nontrivial transformation on~$\Pi$.
That we push the anyons and fuse them to vacuum, amounts to a recovery operation.
Even if $A$ consists of several disks,
the fusion can happen in the individual disks
and overall we obtain a recovery operation.

This picture continues to hold in fracton models.
For example, in the cubic code model~\cite{Haah2011Local},
even though a single excitation is immobile,
it can be pushed at the expense of creating others,
and if a cluster of excitations is created by a local operator
then the cluster always fuses into the vacuum.
It can be shown~\cite{BravyiHaah2011Memory} that indeed this leads to a recovery operation
and the cubic code model has our homogeneous topological order.
We believe that the homogeneous topological order condition is satisfied 
for all explicit fracton constructions to date.
At least, all the cubic codes~\cite{Haah2011Local}, 
the X-cube model, and the checkerboard model~\cite{VHF2016}
have homogeneous topological order;
see \cref{sec:exactPauli} below.
Rigorous verification is anticipated for other models~\cite{Wen2020,Aasen2020} 
in flat space or general manifolds~\cite{Shirley2017,You2019}.

Our setting assumes finite dimensional degrees of freedom,
and hence does not immediately cover theories with $U(1)$ degrees of freedom~\cite{Pretko2016,Gromov2018}.
However, in these theories the degeneracy should be counted in units of $U(1)$ degrees of freedom,
which requires some regularization.

We will handle situations where the recovery of a correctable set is not perfect;
see the last section on approximate recovery.
The same conclusion will hold under a relaxed, approximate setting.

\subsection[Exact Code Hamiltonians]{Translation invariant exact code Hamiltonians}
\label{sec:exactPauli}

A translation-invariant exact code Hamiltonian $H = - \sum_j h_j$~\cite{Haah2012PauliModule}
is an unfrustrated Hamiltonian on an infinite lattice with commuting terms $h_j$
each of which is a tensor product of Pauli matrices (Pauli operator)
such that any \emph{finitely} supported Pauli operator that commutes with every Hamiltonian term
is a product of Hamiltonian terms up to a phase factor.
Explicit examples are the cubic code models~\cite{Haah2011Local} 
and the X-cube and checkerboard model~\cite{VHF2016}.
Here let us show that 
\emph{
the ground state subspace of such a Hamiltonian on any periodic finite lattice
has the homogeneous topological order.
}
Our space manifold in this subsection is a $d$-torus $T^d$.

Since this Hamiltonian $H$ consists of commuting terms,
all correlation functions decay abruptly to zero
and hence the union of any two correctable regions is correctable 
whenever the two regions are so separated 
that no single term of the Hamiltonian can overlap the two regions simultaneously.
(Proof: If $O_A$ and $O_{A'}$ are two operators on correctable regions $A$ and $A'$, respectively,
the Knill-Laflamme criterion~\cite{KnillLaflamme1996} 
reads $\Pi O_A \Pi = c(O_A) \Pi$ and $\Pi O_{A'} \Pi = c(O_{A'}) \Pi$.
By pulling out local ground state projectors $\pi_j = \tfrac12(I+h_j)$ from $\Pi = \pi_j \Pi$,
we see that $\Pi O_A O_{A'} \Pi = \Pi O_A \Pi O_{A'} \Pi = c(O_A)c(O_{A'}) \Pi$.
By linearity, the Knill-Laflamme criterion is satisfied for all operators on $A$ union $B$.)

Hence, it remains to show that the homogeneous topological order condition is obeyed 
over a region $A$ (a set of qudits) of arbitrary size 
such that its $a$-neighborhood can be covered by a topological ball.
That is, we have to show that $A$ is always correctable.
We choose the microscopic length $a$ to be 
a sufficiently large constant ($5$ is enough) multiple of the interaction range of $H$.
Here the interaction range means the minimum diameter of a ball
that can cover the support of every term $h_a$ of $H$.

We check the Knill-Laflamme condition:
for any operator $O_A$ on $A$, there exists a complex number $c(O_A)$ such that $\Pi O_A \Pi = c(O_A) \Pi$.
First, as usual, we reduce the problem where $O_A$ is a tensor product of Pauli matrices.
Indeed, if Knill-Laflamme condition is obeyed for every Pauli operator on $A$,
then for any $\CC$-linear combination $O_A = \sum_k \alpha_k P_k$ of Pauli operators on $A$
we see $c(O_A) = \sum_k \alpha_k c(P_k)$.
Now, if a Pauli operator $P_A$ on $A$ does not commute with any Hamiltonian term $h_j$,
we know $P_A h_j = - h_j P_A$,\footnote{
This assumes that the underlying degrees of freedom are qubits,
but generalization is straightforward.
}
implying that $\Pi P_A \Pi = \Pi P_A h_j \Pi = - \Pi h_j P_A \Pi = - \Pi P_A \Pi = 0$, 
and the Knill-Laflamme condition is obeyed.
So, the problem is further reduced to the case 
where the operator $O_A = P_A$ is a Pauli operator and commutes with every Hamiltonian term.

Let $B$ be a topological ball that contains the $a$-neighborhood of $A$.%
\footnote{
In this argument it is possible that, for example,
$A$ consists of sites near a ``long line''
$\{ (t, 10 t) \in \RR^2/\ZZ^2 ~|~ 0.01 \le t \le 0.99 \}$ in a 2-torus.
This long line can be covered by a topological 2-disk,
which cannot be contained in a rectangle of linear dimensions less than~1.
}
Fix an arbitrary point of $B$
to consider the lift $\tilde B$ of $B$ into the covering space $\RR^d$ of~$T^d$.
Our periodic lattice in $T^d$ is covered by an infinite lattice in $\RR^d$.
The lift $\tilde B \subset \RR^d$ is a bounded topological ball,
and the Pauli operator $P_A$ is lifted uniquely 
to a finitely supported Pauli operator $\tilde P$ on $\tilde B$.
Since $a$ is much larger than the interaction range,
the commutativity of $P_A$ with Hamiltonian terms 
implies that $\tilde P$ also commutes with every Hamiltonian term.
By assumption on \emph{exact} code Hamiltonians,
it follows that $\tilde P$ is a product of some Hamiltonian terms on $\RR^d$ up to a phase factor.
But, a Hamiltonian term on $\RR^d$ maps under the covering map to a Hamiltonian term on $T^d$.
Therefore, $P_A$ is a product of Hamiltonian terms up to a phase factor, say $\eta$.
Since $H$ is unfrustrated, i.e., each $h_j$ takes eigenvalue $+1$ on $\Pi$,
we conclude that $P_A \Pi = \eta \Pi$.
The Knill-Laflamme condition is obeyed with $c(P_A) = \eta$.

It requires calculation to check if a given Hamiltonian is an exact code Hamiltonian.
This can be done by a polynomial method of \cite{Haah2012PauliModule},
that is used for the X-cube and checkerboard model in \cite{VHF2016},
or by a more elementary method of \cite{Haah2011Local} that is used for the cubic code models.

\section{Relation to other notions}

Our notion of homogeneous topological order 
is broader than the so-called ``liquid'' topological order~\cite{ZengWen2014,SwingleMcGreevy2014}.
This is a notion given to a system-size-indexed family of many-body states
which are interrelated by locality preserving unitaries with supply of ancilla qudits in a fixed state
(entanglement RG).
This condition is not satisfied by fracton phases~\cite{Haah2013,Shirley2017,Dua2020}.
On the other hand, our homogeneity allows us to consider a single finite system, rather than a family,
as long as there is a clear hierarchy in the length scales~$a \ll L$.

Unlike a mathematical topological quantum field theory 
as a functor from topological spaces with bordisms
to vector spaces with linear maps~\cite{BakalovKirillov2001},
our notion does not aim to coherently bundle Hamiltonians on various manifolds.
The notion is a finitary condition on a subspace of a Hilbert space with a locality structure.
It is however robust under locality preserving unitaries:
given a system of qudits, if a subspace $\Pi$ is homogeneously topologically ordered,
then for any locality preserving unitary $U$ the transformed subspace $U \Pi U^\dagger$
is also homogeneously topologically ordered with 
a slightly increased lattice length scale $a$
by which one takes the neighborhood of a correctable set.

On a technical side, 
it is necessary for a sensible definition that we consider the \mbox{$a$-neighborhood} of a set $A$.
Since we are considering a finite, discrete set of qudits sitting on a manifold,
any set of qudits is covered by a disjoint union of (tiny) balls.
So, without taking the $a$-neighborhood, our notion of homogeneous topological order would be vacuous.
This technicality aligns well with the anyon-pushing intuition above,
as the anyons are just near the operator they are inserted by,
not necessarily right on top of the operator.

Bravyi, Hastings, and Michalakis~\cite{BravyiHastingsMichalakis2010stability}
impose a condition that the local reduced density matrix of a ball-like region~$A$
be completely determined by Hamiltonian terms that touch~$A$,
as long as $A$ has diameter $\calO(L^\gamma)$ for some fixed $\gamma > 0$.
Assuming this,
they prove perturbation stability of the energy gap.
If we identify our microscopic length scale~$a$ with the interaction range of a Hamiltonian
and replace their size restriction with our topological restriction on~$A$,
then their condition becomes our homogeneous topological order condition.
See the equivalence of different correctability criteria~\cite{FHKK2016}.
While the identification of~$a$ with the interaction range is natural,
our requirement of the indistinguishability of $\Pi$ up to length scale~$\calO(L)$ is stronger than theirs
because their~$\gamma$ may be smaller than~$1$.
However, we do not know any translation invariant Hamiltonian
where the indistinguishability of the ground states 
is satisfied for $\calO(L^\gamma)$-sized operators for some positive~$\gamma < 1$
but not for $\calO(L)$-sized operators.
That is, as far as we know, 
for every translation invariant model on a torus of linear size~$L$
the ground state subspace admits either a local observable
or no observable at all within any box of linear size $L/2$.

\section{Nonexamples}

If a subspace $\Pi$ of $\CC$-dimension greater than~1 admits a local observable, say $Z$,
(as does the ground state subspace of the classical Ising model),
then our homogeneous topological order condition does not hold.
The local operator $Z$ whose eigenspectrum decomposes $\Pi$ ($Z$ is a local order parameter)
must not commute with some operator $O$ that acts within~$\Pi$, 
i.e.,~$[O,\Pi] = 0$ and $[O,Z] \neq 0$.
Then, this operator $O$ or any equivalent operator (that is \emph{conjugate} to $Z$)
must have overlapping support with $Z$.
Put differently, 
there cannot exist any operator $O'$ that is equivalent to $O$ ($(O-O')\Pi = 0$)
such that $O'$ avoids the support of $Z$.
Hence, the support of $Z$ is not correctable, though it is covered by a small ball.
Therefore, the absence of local observables for $\Pi$
is necessary for our homogeneous topological order.

However, the lack of local observables is not sufficient.
There is a non-translation-invariant 
gapped Hamiltonian~\cite{Michnicki20123-d} in three spatial dimensions
that has a perturbation-stable ground space
but fails to satisfy our homogeneous topological order.
This is a network of $\ZZ_2$-gauge theory blocks
where each block has linear size $L^{2/3}$
and there are $L^{1/3}$ blocks along each of three spatial directions,
``welded together'' in a certain way.
The overall system is embedded in a 3-torus of linear size $L$.
At a scale where $L^{2/3}$ is a unit length,
the system looks like an Ising model.
This example breaks our homogeneity condition
because a ball that contains a gauge theory block 
is still very small compared to the total system,
but is not avoided by some operator on the ground space.

In a similar vein, it is rather trivial to break our homogeneity while keeping the ground space
as an error correcting code and keeping the perturbation-stability of energy gap above the ground space.
Namely, one can simply introduce intermediate length scale $\ell$ such that $a \ll \ell \ll L$
and juxtapose $(L/\ell)^d$ boxes in $d$ dimensions,
each of which has homogeneous topological order and has linear size $\ell$.

These nonexamples illustrate that the length scale at which observables for $\Pi$
start to appear has to be truly macroscopic.
We have evaded the inhomogeneity due to intermediate length scales
by stating the condition in terms of the topology of the ambient space.

\section{Proof of the theorem}
Let us now prove the theorem rigorously.
We will use the following result:

\begin{fact}[Cerf and Cleve~\cite{CerfCleve1997}]\label{lem:singleton}
Let $A\sqcup B \sqcup C$ be a partition of the set of qudits,
and $\Pi$ be an orthogonal projector on $\calH_{ABC}$.
If $A$ is correctable with respect to $\Pi$ and so is $B$,
then $\dim_\CC \Pi \le \dim_\CC \calH_C$
where $\calH_C$ is the Hilbert space of qudits in $C$.
\end{fact}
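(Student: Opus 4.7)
The plan is to purify the code space using a reference system and convert correctability into an information-theoretic decoupling condition, after which a short entropy calculation closes the argument. First I would introduce an auxiliary system $R$ of dimension $D = \dim_\CC \Pi$ and pick the maximally entangled pure state $|\Phi\rangle_{RABC}$ supported on $R \otimes \Pi$; its reduced state on the reference is maximally mixed, so $S(R) = \log D$.

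Next I would establish the standard equivalence: correctability of a region $X \in \{A,B\}$ against erasure is the same as the decoupling $\rho_{RX} = \rho_R \otimes \rho_X$, i.e.\ $I(R{:}X)_\Phi = 0$. This follows directly from the Knill--Laflamme form in the footnote: expanding $|\Phi\rangle = \frac{1}{\sqrt{D}} \sum_k |k\rangle_R |\psi_k\rangle$ in an orthonormal basis of $\Pi$, the identity $\Pi O_X \Pi = c(O_X) \Pi$ forces both the diagonal reduced state on $X$ to be a single $\sigma_X$ independent of $k$, and the off-diagonal blocks $\tr_{X^c}(|\psi_k\rangle\langle\psi_{k'}|)$ to vanish for $k \ne k'$; together these say that $R$ and $X$ are uncorrelated in $|\Phi\rangle$.

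With both $\rho_{RA}$ and $\rho_{RB}$ product, purity of $|\Phi\rangle_{RABC}$ gives
\begin{align}
S(BC) &= S(RA) = \log D + S(A), \\
S(AC) &= S(RB) = \log D + S(B),
\end{align}
and ordinary subadditivity $S(BC) \le S(B) + S(C)$ together with $S(AC) \le S(A) + S(C)$ yields, upon adding, $2\log D \le 2 S(C) \le 2 \log \dim_\CC \calH_C$, which is the stated bound.

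The one step that requires care is the translation between the algebraic Knill--Laflamme criterion and the entropic decoupling statement; once that standard equivalence is recorded, everything else is a direct manipulation of von Neumann entropies on the maximally entangled purification. Morally the argument is the no-cloning theorem in disguise: if $A$ alone and $B$ alone can each be discarded without loss of logical information, then the logical information already lives entirely on their intersection $C$, so $C$ must be large enough to hold a $D$-dimensional quantum system.
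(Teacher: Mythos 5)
Your proof is correct and follows essentially the same route as the paper's: purify $\Pi$ with a maximally entangled reference $R$, use correctability of $A$ and $B$ to get $I(R{:}A)=I(R{:}B)=0$ (the paper cites this decoupling form directly from Cerf--Cleve, whereas you derive it from the Knill--Laflamme criterion stated in the footnote), and then combine purity of $|\Phi\rangle_{RABC}$ with subadditivity to conclude $\log\dim_\CC\Pi \le S(C) \le \log\dim_\CC\calH_C$. The entropy bookkeeping matches the paper's line by line, so there is nothing to add.
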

\begin{proof}
One of the equivalent conditions of the correctability is that
no matter how a state $\rho^{ABCR}$  that commutes with $\Pi = \Pi_{ABC}$ 
is entangled between $ABC$ and a reference system $R$,
the mutual information between a correctable region $A$ and $R$ is always zero:
$S(\rho^A) + S(\rho^R) - S(\rho^{AR}) = 0$ where $S$ is the von Neumann entropy~\cite{CerfCleve1997}.
When the overall state on $ABCR$ is pure,
the zero mutual information condition reads $S(\rho^A) + S(\rho^{ABC}) - S(\rho^{BC}) = 0$.
Similarly for $B$, we have $S(\rho^B) + S(\rho^{ABC}) - S(\rho^{AC}) = 0$.
The subadditivity of entropy implies
$S(\rho^A) + S(\rho^B) + 2S(\rho^{ABC}) = S(\rho^{AC}) + S(\rho^{BC}) \le S(\rho^A) + S(\rho^B) + 2 S(\rho^C)$,
or $S(\rho^{ABC}) \le S(\rho^C)$ for any state in $\Pi$.
Taking the maximally entangled state between $\Pi_{ABC}$ and $\calH_R$,
we have $S_{ABC} = \log \dim_\CC \Pi$.
It is always true that $S_C \le \log \dim_\CC \calH_C$.
\end{proof}

\begin{figure}
\centering
\includegraphics[width=0.4\textwidth, trim={0mm 105mm 240mm 0mm}, clip]{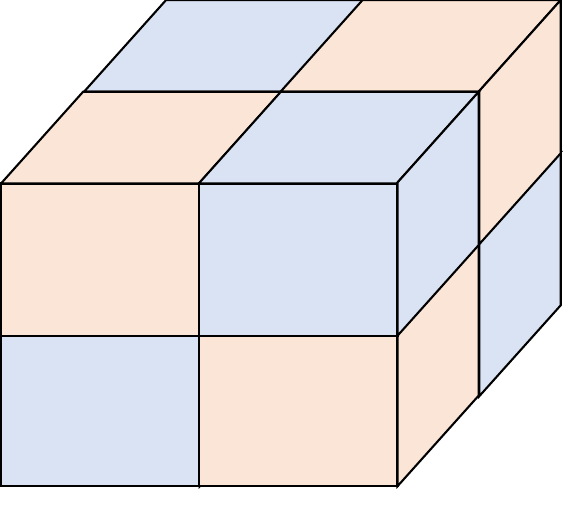}
\caption{A cellulation of a $d$-torus with two colors ($d=3$).
The union of the cells of one color
becomes a disjoint union of topological balls upon deletion of a small neighborhood of the $(d-2)$-skeleton.}
\label{fig:checkerboard}
\end{figure}

\begin{proof}[Proof of \cref{thm:degbound}]
Let us handle the simple case of a sphere.
The northern and southern hemispheres are both balls.
Hence, letting $A$ and $B$ be these hemispheres, respectively, and letting $C$ be empty
in \cref{lem:singleton}, we see that $\dim_\CC \Pi = 1$.

For $M$ with general topology,
we consider a cellulation of $M$ with the following properties.
This is to apply \cref{lem:singleton} following \cite{BravyiPoulinTerhal2010Tradeoffs}.
\begin{enumerate}
\item Each $d$-cell is colored by either red or blue and is topologically an embedded ball.
\item If we delete the $10a$-neighborhood of the $(d-2)$-skeleton,
the union of cells of the same color consists of topological balls separated by distance $> 2a$.
\end{enumerate}
For a $d$-torus, a checkerboard cellulation as in \cref{fig:checkerboard} qualifies.

Given such a cellulation, let $C$ be the set of qubits within $10a$-neighborhood of
the $(d-2)$-skeleton. The number of qudits in $C$ is $\calO((1/a)^{d-2})$
where the hidden constant depends on the geometry of the cells, but not on $a$.
This statement is most easily proved by considering a cover $\{U_\alpha\}_\alpha$ of $k$-skeleton 
$M^k = \bigcup_\alpha U_\alpha$ and appeal to the compactness of $M$.
Since there are only a constant number (depending on the topology of $M$) of $k$-cells in $M^k$
we consider $M^k$ as if it consisted of a single smooth $k$-dimensional cell with metric inherited from $M$.
We may choose each~$U_\alpha$ to be a sufficiently small open ball 
in which every geodesic sphere of radius $r$ has volume $V_k r^k$ within a factor of 2,
where $V_k = \pi^{k/2}/\Gamma(1+\tfrac{k}{2})$ is the volume of the unit Euclidean $k$-ball.
(In a curved space the volume of a radius $r$ ball has curvature-dependent higher order corrections.
See \cite{GrayVanhecke1979} and references therein.)
Since $M^k$ is compact, a finite subcover $\{U_{\alpha_1}, \ldots, U_{\alpha_n}\}$ covers $M^k$.
Hence, for any $r$ that is smaller than the diameter of any $U_{\alpha_j}$ where $j = 1,\ldots,n$,
the volume of any radius $r$ ball of $M^k$ may be computed as if it were a Euclidean ball up to a uniform constant.
Hence, the $k$-skeleton $M^k$ has volume $\calO((L/a)^k)$ in units of radius-$a$ dimension-$k$ balls.
The $10a$-neighborhood of $M^k$ within $M = M^d$ 
has volume $\calO((L/a)^k)$ in units of radius-$a$ dimension-$d$ balls,
in each of which there are $\calO(1)$ qudits by assumption.

Let $A$ be the set of all qudits in the red cells but not in $C$.
By the second condition of our cellulation, the $a$-neighborhood of $A$ 
is covered by a disjoint union of balls.
Since $\Pi$ has homogeneous topological order, $A$ is correctable.
Similarly, let $B$ be the set of all qudits in the blue cells but not in $C$,
and we see that $B$ is correctable.
Now, \cref{lem:singleton} implies the theorem.

\begin{figure}[t]
\includegraphics[width=\textwidth, trim={0mm 130mm 110mm 0mm}, clip]{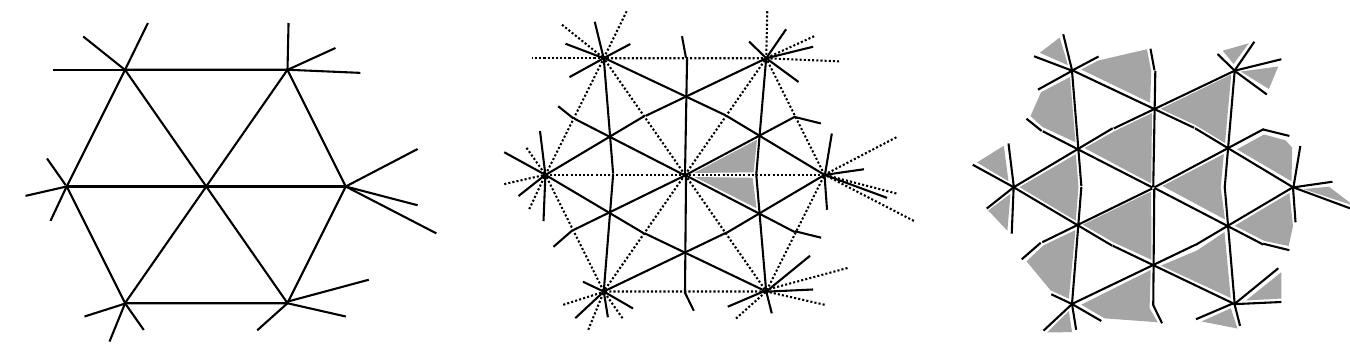}
\caption{$d=2$ 
(Left) 
A portion of the $(d-1)$-chain $\Delta'$.
(Middle) 
The dashed $(d-1)$-simplices form $(\Delta')'$ of $M''$.
The solid $(d-1)$-simplices form the chain $N = \Delta'' - (\Delta')'$.
The shaded $d$-simplices are a pair of a $d$-simplex and its partner.
(Right) 
The shaded $d$-cells form the chain $P$ whose boundary is $N$.
}
\label{fig:chainN}
\end{figure}

It remains to show the existence of the desired cellulation for any given closed Riemannian manifold~$M$.
(We already have proved the existence if~$M$ is a $d$-torus.)
By Whitehead's theorem~\cite{Whitehead1940}, $M$ that is a smooth manifold admits a triangulation.
Let~$M'$ be the barycentric subdivision of the triangulation,
and~$M''$ be the barycentric subdivision of~$M'$.
($M''$ is the second barycentric subdivision of~$M$.)
Let~$\Delta'$ be the homological $\ZZ_2$-chain that is the sum of all $(d-1)$-cells of~$M'$.
Likewise, let $\Delta''$ be the $\ZZ_2$-chain of all $(d-1)$-cells of $M''$.
Under the barycentric subdivision $M' \mapsto M''$,
every $k$-simplex $s$ of~$M'$ can be thought of as the union of all $k$-simplices of~$M''$ 
that are subsets of~$s$;
this association defines a unique map~$\mathsf{B}$ 
from the $k$-chain group of $M'$ to the $k$-chain group of~$M''$ 
with $\ZZ_2$ coefficients.
Let $(\Delta')'$ be the image of $\Delta'$ under~$\mathsf{B}$,
and define $N = \Delta'' - (\Delta')'$, a $(d-1)$-chain of $M''$.
See the left and middle figures of \cref{fig:chainN}.

Every $d$-simplex of $M''$ has a unique ``partner'' $d$-simplex
that shares a $(d-1)$-face which is a nonzero summand in $(\Delta')'$.
Let $Q = \{ (s, \text{partner of }s) \}$ be the collection of all the pairs of a simplex~$s$ and its partner.

By Whitney's theorem~\cite{HalperinToledo},
$\Delta'$ represents the Poincar\'e dual of
the first Stiefel-Whitney class~$w_1$,
and so does~$\Delta''$.
Since~$\Delta''$ and~$(\Delta')'$ are homologous,
the chain~$N$ is null-homologous; $N = \partial P$ for some $d$-chain~$P$ of~$M''$.
See the right figure of \cref{fig:chainN}.
If $P$ has a $d$-simplex as a nonzero summand, then $P$ must have its partner as a nonzero summand as well,
since the $(d-1)$-face between a simplex and its partner is absent in $N$.
Therefore, we may identify $P$ with a subcollection $Q_{red} \subset Q$
and define $Q_{blue} = Q \setminus Q_{red}$.

The two subcollections $Q_{red},Q_{blue}$ give a desired cellulation of the space as follows.
We merge each simplex and its partner to form a $d$-cell.
Under this merger, $Q_{red}$ is a collection of topological $d$-balls
such that any two different balls do not share a $(d-1)$-face;
if they did, $\partial P$ would not have that $(d-1)$-face as a nonzero summand.
Hence, if we delete a small neighborhood of the $(d-2)$-skeleton of $M''$,
$Q_{red}$ becomes a collection of separated balls.
The same is true for $Q_{blue}$.

The construction of $Q_{red}$ and $Q_{blue}$ 
depends on the triangulation of $M$, not on $a$.
We have completed the existence of the desired cellulation
starting with any triangulation of~$M$.
This completes the proof of the theorem in the general case.

Let us prove the special case of $d=2$.
We treat the orientable and nonorientable cases separately.
An orientable surface is a connected sum $M = \#^g T^2$ of $g$ two-tori $T^2$.
Fix a triangulation on $T^2$, and let us build a triangulation on $M$
by keeping all triangles on each $T^2$ except for one or two triangles.
The dropped triangles are to glue the tori.
Thus we obtain a triangulation of $M$ 
where each of the numbers of triangles, edges, and vertices is $\calO(g)$,
where the hidden constant depends only on the initial triangulation of $T^2$.
The last statement remains true even if we subdivided the triangulation as above,
albeit with an increased hidden constant in $\calO(g)$.
These simplices are not isometric and may have widely different area or length;
they depend on the embedding of the simplicies into the surface that has metric.
We have shown that $\log \dim_\CC \Pi$ is bounded by
the density of degrees of freedom times the volume of the $a$-neighborhood of the $0$-skeleton.
This is again $\calO(g)$, independent of the metric on the surface;
this is a special property of $d=2$.
This completes the argument for orientable surfaces.
A nonorientable surface is a connected sum of projective planes
so a completely parallel argument shows that $\log \dim_\CC \Pi$ 
is at most linear in the demigenus, independent of the metric.
\end{proof}

If $M$ were orientable, then the first Stiefel-Whitney class $w_1$ vanishes,
and we did not have to consider the second barycentric subdivision $M''$;
$\Delta'$ would already be null-homologous and 
it would suffice to let $Q_{red}$ be the $d$-chain whose boundary is $\Delta'$.

One may consider a hyperbolic surface with a constant negative Gaussian curvature,
modded by a suitable group action to obtain a sequence of compact surfaces with growing genus.
Having a constant curvature means that this collection of surfaces 
have local patches that are isometric;
however, this does not mean that two surfaces in this collection are globally isometric
since they cannot even be homeomorphic.

\section{Approximate recovery}

\Cref{thm:degbound} has assumed the perfect recovery from erasure errors.
This is of course an idealization that does not hold generically.
However, we note that a certain approximate correctability
suffices for essentially the same degeneracy bound to hold.

We begin with a notion of approximate correctability~\cite{FHKK2016}.
We say that a subset $X$ of qudits is \emph{$\delta$-avoided} with respect to $\Pi$
if for every unitary operator $U^{XX^c}$ that commutes with $\Pi$
there exists an operator $V^{X^c}$ supported on the complement $X^c$ of $X$
such that 
\begin{align}
\norm{V^{X^c}} &\le 1, \nonumber\\
\norm{(U^{XX^c} - V^{X^c})\Pi} &\le \delta,\\
\norm{\Pi(U^{XX^c} - V^{X^c})} &\le \delta.\nonumber
\end{align}
If $\delta = 0$, this reduces to our earlier definition of the exact correctability
because any linear transformation is a $\CC$-linear combination of unitaries (actually at most four unitaries).
Then, our \emph{approximate} homogeneous topological order condition is defined by
requiring $\delta$-avoidance for any subsystem 
whose ``infinitesimal'' neighborhood (enlargement by the lattice spacing $a$)
is covered by a collection of disjoint topological balls.
In other words, for a ground state subspace with approximate homogeneous topological order,
one can always achieve any linear transformation 
by acting on any subsystem that circumvents ball-like regions
at the cost of small error $\delta$ in operator norm.
Note that we do not require $\delta$ to approach zero in a large system size limit;
however, we expect that $\delta$ can depend on $a$
by which one takes the neighborhood of an avoided region.%
\footnote{
A reader may wonder what if we started with an approximate Knill-Laflamme condition 
such as ``$\norm{ \Pi O \Pi - c(O) \Pi } < \epsilon$.''
It appears that there is too large a Hilbert space dimension factor to guarantee 
the conclusion of \cref{lem:approximate-singleton}.
}

With these definitions,
\Cref{lem:singleton} generalizes as:
\begin{lemma}\label{lem:approximate-singleton}
Let $A \sqcup B \sqcup C$ be a partition of the set of qudits,
and $\Pi$ be an orthogonal projector on $\calH_{ABC}$.
If $A$ is $\delta$-avoided with respect to $\Pi$ and so is $B$,
then 
\begin{align}
(1 - 27 \delta \log \tfrac 1 \delta) \log \dim_\CC \Pi \le \log \dim_\CC \calH_C.
\label{eq:deltakS}
\end{align}
Therefore, with $\delta$ sufficiently small,
$\log \dim_\CC \Pi$ is at most proportional to the volume of~$C$.
\end{lemma}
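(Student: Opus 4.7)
The plan is to mirror the proof of \cref{lem:singleton}, replacing each exact entropic identity by a quantitative approximation and carefully tracking the accumulated error. The essential step is to upgrade ``$A$ correctable $\Rightarrow I(A{:}R)=0$'' to ``$A$ is $\delta$-avoided $\Rightarrow I(A{:}R)\le\eta$'' with $\eta$ small enough that the rest of the argument reproduces~\eqref{eq:deltakS}; once that is in place, the remaining entropic manipulations parallel those of \cref{lem:singleton} line by line.

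First I would introduce a maximally entangled purification $\ket{\Omega}\in\Pi_{ABC}\otimes\calH_R$ with $\dim_\CC R=\dim_\CC\Pi$, so that $\rho^R=I_R/\dim_\CC\Pi$ and $S(\rho^{ABC})=S(\rho^R)=\log\dim_\CC\Pi$. Applying the $\delta$-avoidance of $A$ to a complete family of unitaries commuting with $\Pi$, and then invoking the equivalence between the operator-norm form of $\delta$-avoidance and the standard approximate-recovery formulation established in~\cite{FHKK2016}, I would derive an approximate decoupling estimate
\begin{align}
\trnorm{\rho^{AR}-\rho^A\otimes\rho^R}\le O(\delta),
\end{align}
which is the quantitative replacement for the zero-mutual-information identity used in the exact argument.

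Next I would invoke the Alicki-Fannes-Winter continuity bound applied to the conditional entropy $S(R|A)$. Since the product state $\rho^A\otimes\rho^R$ satisfies $S(R|A)=S(R)=\log\dim_\CC\Pi$ and the AFW prefactor depends on $\log\dim_\CC R=\log\dim_\CC\Pi$, the approximate decoupling gives
\begin{align}
I(A{:}R)_\rho=\log\dim_\CC\Pi-S_\rho(R|A)\le 2\epsilon\log\dim_\CC\Pi+(1+\epsilon)\,h\!\left(\frac{\epsilon}{1+\epsilon}\right)
\end{align}
with $\epsilon=O(\delta)$, and the analogous inequality for $B$. Using purity of $\ket{\Omega}$ on $ABCR$, these rewrite as $S(\rho^{BC})\ge S(\rho^A)+\log\dim_\CC\Pi-\eta$ and $S(\rho^{AC})\ge S(\rho^B)+\log\dim_\CC\Pi-\eta$. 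Adding them and applying subadditivity $S(\rho^{AC})\le S(\rho^A)+S(\rho^C)$ and $S(\rho^{BC})\le S(\rho^B)+S(\rho^C)$, exactly as in the proof of \cref{lem:singleton}, the $S(\rho^A)$ and $S(\rho^B)$ terms cancel and one obtains $S(\rho^C)\ge\log\dim_\CC\Pi-\eta$. Together with $S(\rho^C)\le\log\dim_\CC\calH_C$ and absorbing the $h(\epsilon)$ remainder into a single $\delta\log(1/\delta)$ prefactor on $\log\dim_\CC\Pi$, this produces~\eqref{eq:deltakS}.

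The main obstacle is the choice of continuity bound: a naive Fannes inequality applied directly to $S(\rho^{AR})$ would produce a prefactor of $\log\dim_\CC\calH_A$, which is comparable to the total system size and would trivialize the theorem. The reason to work through $S(R|A)$ and Alicki-Fannes-Winter is precisely that its prefactor depends only on the dimension of the conditioning reference $R$, which equals $\dim_\CC\Pi$ by construction, so that the error enters \eqref{eq:deltakS} multiplicatively against $\log\dim_\CC\Pi$ rather than as an uncontrolled additive term. The sharp numerical constant $27$ would then emerge from unwinding the constants in the FHKK operator-recovery equivalence and in the Alicki-Fannes-Winter inequality, a routine bookkeeping step best deferred to the end.
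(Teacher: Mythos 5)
Your proposal follows essentially the same route as the paper: a maximally entangled purification with $d_R=\dim_\CC\Pi$, the FHKK approximate-decoupling estimate for a $\delta$-avoided region, a mutual-information bound whose dimension prefactor is $\log d_R=\log\dim_\CC\Pi$, and then purity plus subadditivity exactly as in \cref{lem:singleton}. The only cosmetic difference is that you invoke the Alicki--Fannes--Winter conditional-entropy bound explicitly, whereas the paper delegates that same continuity step (trace-distance to a product state implies $I(X{:}R)\le 27\,\delta\log\tfrac1\delta\,\log d_R$) to \cite[App.~F]{FHKK2016}.
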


We have shown above that for any system on a closed Riemannian manifold 
there is a partition such that each of $A$ and $B$ is a collection of disjoint topological balls
and $C$ is (an ``infinitesimal'' $a$-neighborhood of) a codimension~2 subsystem.
Hence, our approximate homogeneous topological order condition with $\delta$ small enough,
implies that $\log \dim_\CC \Pi$ may scale at best as the volume of a codimension~2 subsystem.
This is a generalization of \Cref{thm:degbound} to this approximate setting.

Elements of the proof below have appeared in \cite{FHKK2016}
where various notions of approximate correctability are studied,
so we will be brief and use comfortable, nonoptimal inequalities.

\begin{proof}
If a subset $X$ is $\delta$-avoided,
then for an arbitrary state $\rho^{XX^cR}$ between $\Pi$ and an arbitrary external system $R$ (also called a reference system)
the reduced density matrix $\rho^{XR}$ obeys~\cite[Thm.~8]{FHKK2016}
\begin{align}
\half\trnorm{\rho^{XR} - \omega^X \otimes \rho^R} \le 3 \delta
\end{align}
for some fixed state $\omega^X$.
Being close to a product state, $\rho^{XR}$ has small mutual information 
(assuming $\delta < \tfrac1{10}$)~\cite[App.~F]{FHKK2016}:
\begin{align}
I_\rho(X:R) \le 27 (\delta \log \tfrac 1 \delta) \log d_R
\end{align}
where $I_\rho(X:R) = S(\rho^X) + S(\rho^R) - S(\rho^{XR}) \ge 0$, $S$ denotes the von Neumann entropy,
and $d_R$ is the Hilbert space dimension of the subsystem $R$.
% The hidden constant in $O(1)|X|$ depends only on the density of degrees of freedom.

Suppose $d_R = \dim_\CC \Pi$ and 
consider a maximally entangled state $\rho^{ABCR}$ between $\Pi$ and $R$,
which satisfies for $X = A, B$
\begin{align}
I_\rho(X:R) &\le 27 (\delta \log \tfrac 1 \delta) S(\rho^R)
\end{align}
where $S(\rho^R) = \log \dim_\CC \Pi$.
Considering $I_\rho(A:R) + I_\rho(B:R)$ we see that
\begin{align}
S(\rho^A) + S(\rho^B) + 2 S(\rho^R) \le 54 (\delta \log \tfrac 1 \delta) S(\rho^R) + S(\rho^{AR}) + S(\rho^{BR}).
\end{align}
But the maximally entangled state is pure, so $S(\rho^{AR}) = S(\rho^{BC})$ and $S(\rho^{BR}) = S(\rho^{AC})$.
Using subadditivity, \cref{eq:deltakS} follows.
\end{proof}

{\it Acknowledgments}:
I would like to thank Andrey Gromov and Zhenghan Wang for discussions.

\bibliography{fracton-ref}

\begin{thebibliography}{10}
\providecommand{\url}[1]{\texttt{#1}}
\providecommand{\urlprefix}{URL }
\expandafter\ifx\csname urlstyle\endcsname\relax
  \providecommand{\doi}[1]{doi:\discretionary{}{}{}#1}\else
  \providecommand{\doi}{doi:\discretionary{}{}{}\begingroup
  \urlstyle{rm}\Url}\fi
\providecommand{\eprint}[2][]{\url{#2}}

\bibitem{Nandkishore2018}
R.~M. Nandkishore and M.~Hermele,
\newblock \emph{Fractons},
\newblock Annual Review of Condensed Matter Physics \textbf{10}, 295 (2019),
\newblock \doi{10.1146/annurev-conmatphys-031218-013604},
\newblock \eprint{1803.11196v1}.

\bibitem{Pretko2020}
M.~Pretko, X.~Chen and Y.~You,
\newblock \emph{Fracton phases of matter},
\newblock International Journal of Modern Physics A \textbf{35}, 2030003
  (2020),
\newblock \doi{10.1142/S0217751X20300033},
\newblock \eprint{2001.01722}.

\bibitem{Slagle2017}
K.~Slagle and Y.~B. Kim,
\newblock \emph{Quantum field theory of x-cube fracton topological order and
  robust degeneracy from geometry},
\newblock Phys. Rev. B \textbf{96}, 195139 (2017),
\newblock \doi{10.1103/PhysRevB.96.195139},
\newblock \eprint{1708.04619}.

\bibitem{You2019}
Y.~You, T.~Devakul, S.~L. Sondhi and F.~J. Burnell,
\newblock \emph{Fractonic {Chern-Simons} and {BF} theories},
\newblock Phys. Rev. Research \textbf{2}, 023249 (2020),
\newblock \doi{10.1103/PhysRevResearch.2.023249},
\newblock \eprint{1904.11530}.

\bibitem{Gorantla2020}
P.~Gorantla, H.~T. Lam, N.~Seiberg and S.-H. Shao,
\newblock \emph{More exotic field theories in 3+1 dimensions}
  \eprint{2007.04904}.

\bibitem{Wen2020}
X.-G. Wen,
\newblock \emph{A systematic construction of gapped non-liquid states},
\newblock Phys. Rev. Research \textbf{2}, 033300 (2020),
\newblock \doi{10.1103/PhysRevResearch.2.033300},
\newblock \eprint{2002.02433}.

\bibitem{Aasen2020}
D.~Aasen, D.~Bulmash, A.~Prem, K.~Slagle and D.~J. Williamson,
\newblock \emph{Topological defect networks for fractons of all types}
  \eprint{2002.05166}.

\bibitem{Wang2020}
J.~Wang,
\newblock \emph{Non-liquid cellular states} \eprint{2002.12932}.

\bibitem{KnillLaflamme1996}
E.~Knill and R.~Laflamme,
\newblock \emph{Theory of quantum error-correcting codes},
\newblock Phys. Rev. A \textbf{55}, 900 (1997),
\newblock \doi{10.1103/PhysRevA.55.900},
\newblock \eprint{quant-ph/9604034}.

\bibitem{FHKK2016}
S.~T. Flammia, J.~Haah, M.~J. Kastoryano and I.~H. Kim,
\newblock \emph{Limits on the storage of quantum information in a volume of
  space},
\newblock Quantum \textbf{1}, 4 (2017),
\newblock \doi{10.22331/q-2017-04-25-4},
\newblock \eprint{1610.06169}.

\bibitem{BravyiPoulinTerhal2010Tradeoffs}
S.~Bravyi, D.~Poulin and B.~Terhal,
\newblock \emph{Tradeoffs for reliable quantum information storage in 2d
  systems},
\newblock Phys. Rev. Lett. \textbf{104}, 050503 (2010),
\newblock \doi{10.1103/PhysRevLett.104.050503},
\newblock \eprint{0909.5200}.

\bibitem{Kitaev2003Fault-tolerant}
A.~Y. Kitaev,
\newblock \emph{Fault-tolerant quantum computation by anyons},
\newblock Annals of Physics \textbf{303}, 2 (2003),
\newblock \doi{10.1016/S0003-4916(02)00018-0},
\newblock \eprint{quant-ph/9707021}.

\bibitem{Haah2011Local}
J.~Haah,
\newblock \emph{Local stabilizer codes in three dimensions without string
  logical operators},
\newblock Phys. Rev. A \textbf{83}(4), 042330 (2011),
\newblock \doi{10.1103/PhysRevA.83.042330},
\newblock \eprint{1101.1962}.

\bibitem{BravyiHaah2011Memory}
S.~Bravyi and J.~Haah,
\newblock \emph{Analytic and numerical demonstration of quantum self-correction
  in the 3{D} cubic code},
\newblock Phys. Rev. Lett. \textbf{111}, 200501 (2013),
\newblock \doi{10.1103/PhysRevLett.111.200501},
\newblock \eprint{1112.3252}.

\bibitem{VHF2016}
S.~Vijay, J.~Haah and L.~Fu,
\newblock \emph{Fracton topological order, generalized lattice gauge theory and
  duality},
\newblock Phys. Rev. B \textbf{94}, 235157 (2016),
\newblock \doi{10.1103/PhysRevB.94.235157},
\newblock \eprint{1603.04442}.

\bibitem{Shirley2017}
W.~Shirley, K.~Slagle, Z.~Wang and X.~Chen,
\newblock \emph{Fracton models on general three-dimensional manifolds},
\newblock Phys. Rev. X \textbf{8}, 031051 (2018),
\newblock \doi{10.1103/PhysRevX.8.031051},
\newblock \eprint{1712.05892}.

\bibitem{Pretko2016}
M.~Pretko,
\newblock \emph{Subdimensional particle structure of higher rank u(1) spin
  liquids},
\newblock Phys. Rev. B \textbf{95}, 115139 (2017),
\newblock \doi{10.1103/PhysRevB.95.115139},
\newblock \eprint{1604.05329}.

\bibitem{Gromov2018}
A.~Gromov,
\newblock \emph{Towards classification of fracton phases: the multipole
  algebra},
\newblock Phys. Rev. X \textbf{9}, 031035 (2019),
\newblock \doi{10.1103/PhysRevX.9.031035},
\newblock \eprint{1812.05104}.

\bibitem{Haah2012PauliModule}
J.~Haah,
\newblock \emph{Commuting {Pauli} hamiltonians as maps between free modules},
\newblock Commun. Math. Phys. \textbf{324}(2), 351 (2013),
\newblock \doi{10.1007/s00220-013-1810-2},
\newblock \eprint{1204.1063}.

\bibitem{ZengWen2014}
B.~Zeng and X.-G. Wen,
\newblock \emph{Gapped quantum liquids and topological order, stochastic local
  transformations and emergence of unitarity},
\newblock Phys. Rev. B \textbf{91}, 125121 (2015),
\newblock \doi{10.1103/PhysRevB.91.125121},
\newblock \eprint{1406.5090}.

\bibitem{SwingleMcGreevy2014}
B.~Swingle and J.~McGreevy,
\newblock \emph{Renormalization group constructions of topological quantum
  liquids and beyond},
\newblock Phys. Rev. B \textbf{93}, 045127 (2016),
\newblock \doi{10.1103/PhysRevB.93.045127},
\newblock \eprint{1407.8203}.

\bibitem{Haah2013}
J.~Haah,
\newblock \emph{Bifurcation in entanglement renormalization group flow of a
  gapped spin model},
\newblock Phys. Rev. B \textbf{89}, 075119 (2014),
\newblock \doi{10.1103/PhysRevB.89.075119},
\newblock \eprint{1310.4507}.

\bibitem{Dua2020}
A.~Dua, P.~Sarkar, D.~J. Williamson and M.~Cheng,
\newblock \emph{Bifurcating entanglement-renormalization group flows of fracton
  stabilizer models},
\newblock Phys. Rev. Research \textbf{2}, 033021 (2020),
\newblock \doi{10.1103/PhysRevResearch.2.033021},
\newblock \eprint{1904.12304}.

\bibitem{BakalovKirillov2001}
B.~Bakalov and A.~Kirillov, Jr,
\newblock \emph{Lectures on Tensor Categories and Modular Functors}, vol.~21 of
  \emph{University Lecture Series},
\newblock American Mathematical Society (2001).

\bibitem{BravyiHastingsMichalakis2010stability}
S.~Bravyi, M.~Hastings and S.~Michalakis,
\newblock \emph{Topological quantum order: stability under local
  perturbations},
\newblock J. Math. Phys. \textbf{51}, 093512 (2010),
\newblock \doi{10.1063/1.3490195},
\newblock \eprint{1001.0344}.

\bibitem{Michnicki20123-d}
K.~Michnicki,
\newblock \emph{3d topological quantum memory with a power-law energy barrier},
\newblock Phys. Rev. Lett. \textbf{113}, 130501 (2014),
\newblock \doi{10.1103/PhysRevLett.113.130501},
\newblock \eprint{1208.3496}.

\bibitem{CerfCleve1997}
N.~J. Cerf and R.~Cleve,
\newblock \emph{Information-theoretic interpretation of quantum
  error-correcting codes},
\newblock Phys. Rev. A \textbf{56}, 1721 (1997),
\newblock \doi{10.1103/PhysRevA.56.1721},
\newblock \eprint{quant-ph/9702031}.

\bibitem{GrayVanhecke1979}
A.~Gray and L.~Vanhecke,
\newblock \emph{Riemannian geometry as determined by the volumes of small
  geodesic balls},
\newblock Acta Mathematica \textbf{142}, 157 (1979),
\newblock \doi{10.1007/BF02395060}.

\bibitem{Whitehead1940}
J.~H.~C. Whitehead,
\newblock \emph{On {$C^1$}-complexes},
\newblock Annals of Mathematics \textbf{41}, 809 (1940),
\newblock \doi{10.2307/1968861}.

\bibitem{HalperinToledo}
S.~Halperin and D.~Toledo,
\newblock \emph{{Stiefel-Whitney} homology classes},
\newblock Annals of Mathematics \textbf{96}, 511 (1972),
\newblock \doi{10.2307/1970823}.

\end{thebibliography}

%\nolinenumbers

\end{document}